\theoremstyle{definition}
\newtheorem{algorithm}{Algorithm}
\newtheorem{theorem}{Theorem}
\newtheorem{lemma}[theorem]{Lemma}
\newtheorem{definition}{Definition}
\newtheorem{corollary}[theorem]{Corollary}
\newtheorem{proposition}[theorem]{Proposition}
\newcommand{\ve}{\varepsilon}
\newcommand{\mrm}{\mathrm}
\newcommand{\mbb}{\mathbb}
\newcommand{\wt}{\widetilde}
\newcommand{\linspan}{\mrm{span}}
\renewcommand{\ol}{\overline}
\newcommand{\opvec}{\operatorname{vec}}
\renewcommand{\tr}{\operatorname{Tr}}
\renewcommand\bra[1]{{\langle{#1}|}}
\renewcommand\ket[1]{{|{#1}\rangle}}
\newcommand{\one}{\mathds{1}}
\newcommand{\cD}{\mathcal{D}}
\newcommand{\cE}{\mathcal{E}}
\newcommand{\cH}{\mathcal{H}}
\newcommand{\cI}{\mathcal{I}}
\newcommand{\cL}{\mathcal{L}}
\newcommand{\cM}{\mathcal{M}}
\newcommand{\cN}{\mathcal{N}}
\newcommand{\cV}{\mathcal{V}}
\newcommand{\id}{\textrm{id}}
\providecommand{\ignore}[1]{}
\begin{document}

\preprint{APS/123-QED}

\title{Local Distinguishability of Multipartite Orthogonal Quantum States: \\ Generalized and Simplified}

\author{Ian George}%
\thanks{These authors contributed equally.
}
\affiliation{Centre for Quantum Technologies, 3 Science Drive 2, 117543, Singapore}%
\author{Mohammad A. Alhejji}
\thanks{These authors contributed equally.
}
\affiliation{Center for Quantum Information and Control, University of New Mexico, Albuquerque, NM 87131, USA.}

\date{\today}

\begin{abstract}
In a seminal work [PRL85.4972], Walgate, Short, Hardy, and Vedral prove in finite dimensions that for every pair of pure multipartite orthogonal quantum states, there exists a one-way local operations and classical communication (LOCC) protocol that perfectly distinguishes the pair. We extend this result to infinite dimensions with a simpler proof. For states on \(\mathbb{C}^{d_A \times d_A} \otimes \mathbb{C}^{d_B \times d_B}\), we strengthen this existence result by constructing an \(O(d_A^2 d_B^2)\)-time algorithm that specifies such a perfect one-way LOCC protocol. Finally, we establish the equivalence between Walgate et al.'s result and the fact that the one-shot environment-assisted classical capacity of every quantum channel is at least 1 bit per channel use, thereby clarifying the literature on these notions. At the core of all of these results is the fact that every operator with vanishing trace admits a basis where its diagonal entries are all zero.
\end{abstract}

\maketitle

\section{Introduction}
\label{sec: intro}

In recent years, there has been a global effort to implement quantum information processing. Despite the progress made, the reality is that controlling quantum systems made up of many subsystems remains difficult. For this reason, near-term quantum information processing will have to rely heavily upon classical computing infrastructure. Two examples highlight this. Early quantum communication networks will be limited by the noise in the entanglement they distribute. To counteract this, users can act on their local portions of the quantum system and communicate classical information to convert the state of the system into the wanted resource state (see e.g.~\cite{das2021universal,chen2024capacitiesentanglementdistributioncentral}). Similarly, implementing a large quantum circuit is difficult, so it is often advantageous to build up large quantum circuits by connecting smaller circuits together via classical communication--- a method known as circuit knitting (see \cite{schmitt2025cutting,harrow2025optimal} and references therein).

In the above examples, spatially separated parties share control of a composite quantum system; because of the spatial separation, the parties can only manipulate their respective portions of the system and exchange classical messages. The set of all operations arising from the composition of these actions is called local operations and classical communication (LOCC). The question of what quantum dynamics can be implemented by LOCC is long-standing. While the initial motivation for this question was to better understand entanglement \cite{bennett1996mixed, nielsen1999conditions, Walgate-2000a, hayden2001asymptotic, chitambar2011local, Chitambar-2014a, chitambar2017round} (see \cite{horodecki2009quantum, gour2024resourcesquantumworld} for relevant reviews), the previous examples highlight that understanding what dynamics are implementable by LOCC is critical for implementing quantum information processing.

If we are to use LOCC to implement quantum information processing, then we should have a robust understanding of LOCC, which we do not. As identified in \cite{yamasaki2024entanglement}, one gap in our understanding of LOCC is most works assume the quantum systems acted upon are finite-dimensional, but physically relevant quantum systems often are properly modeled as infinite-dimensional \cite{nielsen2010quantum,serafini2023quantum}. Another known issue in the theory of LOCC 
is that we lack methods for constructing LOCC protocols for tasks. If one is not appealing to a pre-existing constructive result, then one is limited to recent heuristic classical algorithms for optimizing short circuits representing LOCC protocols for various tasks \cite{zhao2021practical,Yan_2025,liu2025dynamiclocccircuitsautomated,li2025optimizingloccprotocolsproduct}. While the lack of methods is partially because the structure of LOCC protocols is complex, we also do not completely understand the simplest tasks in LOCC. As often complex problems are addressed by building on our understanding of simpler cases, we likely can benefit from a robust characterization of LOCC for simple tasks.

Given the aforementioned gaps in our understanding of LOCC and the value of characterizing simple tasks, we return to one of the earliest foundational results in LOCC: each pair of finite-dimensional orthogonal pure states can be discriminated using one-way LOCC \cite{Walgate-2000a}. First, we show that the result holds for infinite-dimensional orthogonal pure states and that for any fixed nonzero probability of error, the LOCC protocol only requires the parties to communicate a finite number of bits (Section \ref{sec:discrim-result}). Second, we show that when the states are finite-dimensional, a perfect one-way LOCC protocol can be determined efficiently. To the best of our knowledge, this is the first non-heuristic efficient algorithm for constructing an LOCC protocol. Both these results generalize the original result in manners that make it more relevant to implementation. The proofs of both results hinge upon the fact an operator with vanishing trace admits a basis where its diagonal is zero \cite{Fan1984}. Motivated by the prominence of this fact in these proofs, we consider another result that uses this fact: the one-shot environment-assisted classical capacity is at least 1 \cite{gregoratti2003quantum,hayden2004correctingquantumchannelsmeasuring}. We establish this capacity result is equivalent to the claim any two orthogonal pure states are discriminable with one-way LOCC. This equivalence explains why the previous proofs of the capacity result also use the vanishing trace operator result. In total, these results generalize and simplify previous results about LOCC by identifying the relevant mathematical structure.

\section{Preliminaries}
\label{sec: prelim}

Let \(\cH_A\) and \(\cH_B\) denote two complex Hilbert spaces to be used for the description of two quantum-physical systems \(A\) and \(B\). To describe the composite system \(AB\), we use the tensor product Hilbert space \(\cH_{AB} := \cH_A \otimes \cH_B\)~\cite[Section II.4]{reed-simon1}. Each product vector \(\ket{v}_A \otimes \ket{w}_B \in \cH_{AB}\) may be put in correspondence with an operator \(S_{\ket{v} \otimes \ket{w}}\), from the dual space \(\cH_A^*\) to \(\cH_B\), defined by the action 
\begin{gather}
\label{eq: product action}
\bra{a}_A \mapsto S_{\ket{v} \otimes \ket{w}}( \bra{a}_A) := \braket{a|v} \ket{w}_B,
\end{gather}
for all $ \bra{a}_A \in \cH_A^*$. By linearly extending this correspondence to all of \(\cH_{AB}\), we obtain a canonical isomorphism between \(\cH_{AB}\) and the space of Hilbert-Schmidt operators from \(\cH_A^{\ast}\) to \(\cH_B\)~\cite[Section 0.8]{aubrun2017alice}.

Let \(\ket{\tau}_{AB}\) denote a vector in \(\cH_{AB}\). For each \(\ket{a}_A \in \cH_A\), we define $\ket{\tau^{\vert a}}_{B} := S_{\ket{\tau}} (\bra{a}_A) \in \cH_B$. If \(\mathcal{E}_A := \{ \ket{e}_{A} \}_e\) denotes an orthonormal basis for \(\cH_A\), then the aforementioned isomorphism makes plain that \(\ket{\tau}_{AB}\) admits the linear decomposition:
\begin{align}
\label{eq:cond-state-rep}
\ket{\tau}_{AB} = \sum_{e} \ket{e}_{A} \otimes \ket{\tau^{\vert e}}_{B}. 
\end{align}
In a situation where \(\dyad{\tau}_{AB}\) is the state of \(AB\) and subsystem \(A\) is measured in the basis $\mathcal{E}_{A}$, the vector $\ket{\tau^{\vert e}}_{B}$ gives the state of the \(B\) system conditioned on outcome \(e\)~\cite[Section 6.4]{schumacher2010quantum}. Of course, this only makes sense if \(e\) occurs with positive probability. We refer to such $\ket{\tau^{\vert e}}_{B}$ as conditional states. This differs from the convention adopted elsewhere in the literature where conditional states are taken to have unit norm.

Let \(\rho_{AB}\) denote a quantum state (density operator) of the \(AB\) system. By the Born rule~\cite{holevo2001statistical}, the outcome-probabilities of a measurement on \(AB\) with a discrete set of outcomes $\cI$ are determined by a tuple of positive operators \((M_{AB}^{(i)})_{i \in \cI}\) that resolve the identity, $\sum_{i \in \cI} M^{(i)}_{AB} = \mathds{1}_{AB}$, in the sense that the probability of outcome \(i \in I\) is \(\tr(\rho M^{(i)})\). A binary discrimination protocol on \(AB\) is a quantum measurement with two outcomes. As such, its statistics are fully specified by an operator \(M_{AB}\) on \(\cH_{AB}\) that satisfies \(0\leq  M_{AB} \leq \mathds{1}_{AB}\). The discrimination protocol corresponding to \(M_{AB}\) perfectly discriminates between two quantum states \(\rho_{AB}\) and \(\sigma_{AB}\) if and only if \[|\tr( M (\rho - \sigma))| = 1.\]

Let Alice and Bob be two spatially separated parties that have joint access to the system \(AB\) such that Alice can only manipulate subsystem $A$ and Bob can only manipulate subsystem \(B\). According to common usage, a \textit{one-way LOCC} binary discrimination protocol (from Alice to Bob) consists of the following sequence of actions: Alice performs a measurement on system $A$ obtaining  some outcome $i$, Alice communicates $i$ via a classical signal to Bob, Bob performs a two-outcome measurement on system \(B\) (possibly depending on $i$), and, finally, Bob outputs a guess for the state of \(AB\). Assuming Alice's measurement has a discrete set of outcomes, operators corresponding to such binary discrimination protocols have the form 
\begin{align}
\label{eq: locc op}
\sum_{i} E_A^{(i)} \otimes F_B^{(i)},
\end{align}
where \(\sum_i E_A^{(i)} = \mathds{1}_A\), and, for each \(i\), it holds that \(E_A^{(i)} \geq 0\) and that \(0 \leq F_B^{(i)} \leq \mathds{1}_B\).

Lastly, the following fact is central to this work.
\begin{lemma}\cite[Corollary 1]{Fan1984}\label{lem:trace-zero}
Let $T$ denote a trace class operator on a separable Hilbert space $\cH$. Then $\Tr(T) = 0$ if and only if there exists an orthonormal basis $\{\ket{e}\}_{e}$ for $\cH$ such that $\bra{e}T\ket{e} = 0$ for all $e$.
\end{lemma}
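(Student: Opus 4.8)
The two directions have completely different characters, so I would treat them separately. The \emph{if} direction is immediate from the definition of the trace of a trace-class operator: for any orthonormal basis $\{\ket{e}\}_e$ of $\cH$ one has $\tr(T) = \sum_e \bra{e}T\ket{e}$, with the sum absolutely convergent and independent of the chosen basis. Hence if even one basis makes every diagonal entry $\bra{e}T\ket{e}$ vanish, then $\tr(T)=0$. Nothing beyond the definition is needed here.

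For the \emph{only if} direction I would first settle the finite-dimensional case, as its idea drives everything. The plan is to show that whenever $\tr(T)=0$ there is a single unit vector with zero expectation, and then induct on $\dim\cH = n$. The key tool is the Toeplitz--Hausdorff theorem: the numerical range $W(T) := \{\bra{v}T\ket{v} : \norm{\ket{v}}=1\}$ is convex. Each diagonal entry $\bra{e_i}T\ket{e_i}$ lies in $W(T)$, so their average $\tfrac1n\tr(T)=0$, being a convex combination of points of $W(T)$, also lies in $W(T)$; thus there is a unit vector $\ket{v_1}$ with $\bra{v_1}T\ket{v_1}=0$. Compressing $T$ to $\ket{v_1}^{\perp}$ gives an operator of dimension $n-1$ whose trace is $\tr(T)-0=0$, so the induction continues, the base case $n=1$ being trivial. (Targeting the value $\tr(T)/n$ instead of $0$ at each step even yields the sharper fact that the whole diagonal can be made constant.)

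The genuinely hard part is the infinite-dimensional separable case, where the naive induction fails for two reasons: on an infinite-dimensional subspace the numerical range of a compact (hence trace-class) operator need not be closed, so $0$ may fail to be \emph{attained} as a diagonal entry even when the trace vanishes; and I must also guarantee that the orthonormal system I construct is \emph{complete}. My plan is to reduce to finite blocks while driving completeness with a fixed reference basis $\{f_n\}_n$. I build the new basis in stages: at each stage I have orthonormal vectors $e_1,\dots,e_k$ with zero diagonal spanning $V_k$, and the compression $T_k := P_{V_k^{\perp}}\, T\, P_{V_k^{\perp}}$ is trace class with $\tr(T_k)=0$. I then choose a finite-dimensional subspace $W\subseteq V_k^{\perp}$ that contains the projection of the first not-yet-captured reference vector $f_i$ and satisfies $\tr(P_W T_k P_W)=0$, and apply the finite-dimensional result inside $W$ to append a zero-diagonal orthonormal set whose span covers that $f_i$-direction; additivity of the trace of compressions keeps $\tr(T_{k+1})=0$, and cycling through $f_1,f_2,\dots$ forces the span to be dense.

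The crux, and the step I expect to be the real obstacle, is precisely the selection of that finite-dimensional $W$: given a trace-class operator with vanishing trace on an infinite-dimensional space and a prescribed vector, one must exhibit a finite-dimensional subspace through that vector whose block-trace is \emph{exactly} zero, which is where the non-closedness of the numerical range bites. I would handle this by a carry argument, enlarging $W$ one orthonormal vector at a time and using convexity of the numerical range of the successive compressions to push the partial block-trace toward $0$, exploiting that the diagonal entries of a trace-class operator are summable and tend to $0$. It is this trace-class summability that forces the accumulated residual carries to vanish and makes the whole construction converge to a bona fide orthonormal basis with zero diagonal.
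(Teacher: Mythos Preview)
The paper does not actually prove this lemma in full: it cites Fan~\cite{Fan1984} for the infinite-dimensional statement and only sketches the finite-dimensional case. That sketch---Toeplitz--Hausdorff to place $0$ in the (convex) numerical range, peel off a zero-diagonal vector, compress to its orthocomplement, and induct---is exactly your finite-dimensional argument, so on that part you coincide with the paper.

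Where you go beyond the paper is the infinite-dimensional direction, and there your plan is reasonable in architecture but not yet a proof. The gap is precisely the step you flag: producing, at each stage, a finite-dimensional $W \subseteq V_k^\perp$ containing the next reference direction with $\tr(P_W T_k P_W)=0$ \emph{exactly}. Your ``carry argument'' gestures at two different mechanisms without committing to either. If the carries are meant to vanish only in the limit, then the vectors you append at intermediate stages need not satisfy $\bra{e}T\ket{e}=0$, and you have not explained how the final basis acquires that property. If instead each block is to have exactly zero trace, you must show that for a trace-class $S$ with $\tr S = 0$ on an infinite-dimensional space and a prescribed unit vector $v$, there is a finite-dimensional subspace through $v$ whose compressed trace is exactly zero; convexity of the numerical range of the successive compressions and summability of diagonal entries do not by themselves deliver this, because those compressions live on \emph{finite}-dimensional subspaces whose numerical ranges, while convex and closed, need not contain the specific target value you require. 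Fan's argument handles this obstruction with additional work that your sketch does not yet supply; as written, the infinite-dimensional half remains a plausible outline rather than a proof.
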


The assumption that \(\cH\) is separable may be relaxed without loss. Since \(T\) is trace-class, it is compact~\cite[Section.~VI.6]{reed-simon1} and so it is the (operator norm) limit of a sequence of finite rank operators on \(\cH\)~\cite[Section 2.4]{conway1990course}. This implies that there exists a separable subspace \(\cH' \subseteq \cH\) such that \(T\cH' \subseteq \cH'\) and \(T \cH'^\perp = 0\). The restriction of \(T\) to \(\cH'\) is trace-class and has vanishing trace. Complementing the basis for \(\cH'\) arising from this lemma with a basis for \(\cH'^\perp\) yields a basis for \(\cH\) where \(T\) has a matrix presentation with zeros on the diagonal.

In finite dimensions, this lemma, known since the 1950s at the latest~\cite[Sec.~56]{Halmos2012-xv}, can be seen as a consequence of the Toeplitz-Hausdorff theorem \cite{davis1971toeplitz}. The fact that \(\tr T = 0\) implies that \(0\) lies in the convex hull of the numerical range of \(T\), and so by the theorem, there exists a unit vector \(\ket{v_1} \in \cH\) such that \(\bra{v_1} T \ket{v_1} = 0\). Next, we observe that the restriction of \((\one_{\cH} - \dyad {v_1}) T\) to \(\text{span} (\ket{v_1})^\perp\) has vanishing trace and use the theorem again to find a unit vector \(\ket{v_2} \in \text{span} (\ket{v_1})^\perp\) that satisfies \(\bra{v_2} T \ket{v_2} = 0\). We go on in this way and after a finite number of iterations we shall have obtained the sought basis.

\section{Discriminating States with One-Way LOCC}\label{sec:discrim-result}
We now generalize the main result of \cite{Walgate-2000a}.

\begin{theorem}\label{thm:walgate-et-al-thm}
For every pair of orthogonal unit vectors $\ket{\psi}_{AB},\ket{\phi}_{AB} \in \cH_{AB}$, there exists a one-way LOCC binary discrimination protocol that perfectly discriminates between the states \(\dyad{\psi}_{AB}\) and \(\dyad{\phi}_{AB}\). 
\end{theorem}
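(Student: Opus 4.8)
The plan is to exploit the conditional-state decomposition~\eqref{eq:cond-state-rep} and to reduce the entire problem to a single application of Lemma~\ref{lem:trace-zero}. The guiding observation is this: if Alice measures $A$ in an orthonormal basis $\{\ket{e}_A\}_e$ for which the conditional states obey $\braket{\psi^{\vert e}|\phi^{\vert e}}_B = 0$ for \emph{every} outcome $e$, then upon receiving $e$ Bob holds one of two mutually orthogonal (possibly unnormalized) vectors, $\ket{\psi^{\vert e}}_B$ or $\ket{\phi^{\vert e}}_B$, and can complete the discrimination with a single projective measurement. So the whole theorem rests on producing such a measurement basis, and the work is to show one exists.

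First I would package the conditional overlaps into a single operator on $\cH_A$. Since $\ket{\phi}_{AB}\bra{\psi}_{AB}$ is rank one and hence trace class on $\cH_{AB}$, its partial trace $T_A := \tr_B\!\big(\ket{\phi}_{AB}\bra{\psi}_{AB}\big)$ is a trace-class operator on $\cH_A$. A direct computation using~\eqref{eq:cond-state-rep} gives its matrix entries in any orthonormal basis $\{\ket{e}_A\}_e$ as $\bra{e}_A T_A \ket{e'}_A = \braket{\psi^{\vert e'}|\phi^{\vert e}}_B$, so its diagonal entries are exactly the conditional overlaps $\braket{\psi^{\vert e}|\phi^{\vert e}}_B$. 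Moreover $\tr(T_A) = \tr(\ket{\phi}_{AB}\bra{\psi}_{AB}) = \braket{\psi|\phi} = 0$ by the orthogonality hypothesis. Lemma~\ref{lem:trace-zero} then supplies an orthonormal basis $\{\ket{e}_A\}_e$ of $\cH_A$ in which every diagonal entry of $T_A$ vanishes, that is, $\braket{\psi^{\vert e}|\phi^{\vert e}}_B = 0$ for all $e$. This is the basis I would have Alice measure in.

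With this basis fixed, I would assemble the protocol operator $M_{AB} = \sum_e \dyad{e}_A \otimes \Pi^{(e)}_B$, where $\Pi^{(e)}_B$ is the projection onto (the closure of) the span of $\ket{\psi^{\vert e}}_B$, taken to be $0$ when $\ket{\psi^{\vert e}}_B = 0$. This has the one-way LOCC form~\eqref{eq: locc op} with $E_A^{(e)} = \dyad{e}_A$ and $F_B^{(e)} = \Pi^{(e)}_B$: Alice measures $\{\dyad{e}_A\}_e$, sends $e$, and Bob applies $\{\Pi^{(e)}_B, \one_B - \Pi^{(e)}_B\}$, guessing $\psi$ on the first outcome. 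Because $\ket{\psi^{\vert e}}_B \perp \ket{\phi^{\vert e}}_B$ for every $e$, a short calculation gives $\tr(M_{AB}\dyad{\psi}) = \sum_e \norm{\ket{\psi^{\vert e}}_B}^2 = 1$ and $\tr(M_{AB}\dyad{\phi}) = 0$, whence $\abs{\tr(M_{AB}(\dyad{\psi} - \dyad{\phi}))} = 1$ and the discrimination is perfect.

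I expect the only genuine subtleties, and the reason the infinite-dimensional statement is not immediate, to sit in the two places where trace-class and convergence issues enter: confirming that $T_A$ is truly trace class so that Lemma~\ref{lem:trace-zero} applies (handled because $\ket{\phi}_{AB}\bra{\psi}_{AB}$ is rank one and the partial trace preserves both trace-class membership and the value of the trace), and ensuring the decomposition~\eqref{eq:cond-state-rep} and the sums above converge when $\cH_A,\cH_B$ are infinite-dimensional (handled by the Hilbert--Schmidt isomorphism established in the preliminaries). Once the conditional overlaps are recognized as the diagonal of the trace-zero operator $T_A$, there is essentially no remaining obstacle; this recognition is exactly the simplification over Walgate et al., whose argument in effect re-derives the finite-dimensional case of Lemma~\ref{lem:trace-zero} inline rather than invoking it directly.
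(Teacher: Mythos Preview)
Your proposal is correct and follows essentially the same route as the paper: package the conditional overlaps as the diagonal of a trace-zero, trace-class operator on Alice's space, invoke Lemma~\ref{lem:trace-zero} to obtain a basis in which that diagonal vanishes, and then have Bob project onto the span of $\ket{\psi^{\vert e}}_B$. The only cosmetic difference is that you realize this operator as the partial trace $T_A=\tr_B(\ket{\phi}\bra{\psi})$ on $\cH_A$, whereas the paper writes it as $S_{\ket{\phi}}^{*}S_{\ket{\psi}}$ on $\cH_A^{*}$; these are the same object under the canonical identification, and both arguments handle the infinite-dimensional trace-class verification in the same spirit.
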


\begin{proof}
Imagine there exists a basis \(\mathcal{E}_A\) for \(\cH_A\) such that for each \(\ket{e}_A \in \mathcal{E}_A\), $\langle \phi^{\vert e} \vert \psi^{\vert e} \rangle = 0$. Then, let us denote by \(P^{\psi \vert e}_B\) the projector onto the span of $\ket{\psi^{\vert e}}_B$. If Alice performs the projective measurement $(\dyad{e}_{A})_{\ket{e}_A \in \mathcal{E}_A}$ and communicates her outcome $e$ to Bob, then he may perform the binary projective measurement $(P^{\psi \vert e}_B, \mathds{1}_B-P^{\psi \vert e}_B)$ to perfectly discriminate between the two states. In other words, it holds that
\begin{align*} 
\tr\big(\sum_{\ket{e} \in \mathcal{E}_{A}} \dyad{e}_A \otimes {P^{\psi \vert e}_B} (\dyad{\psi}_{AB} - \dyad{\phi}_{AB}) \big) = 1. 
\end{align*}
Now, our business becomes showing such a basis exists.

Consider the operator $S_{\ket{\phi}}^* S_{\ket{\psi}} : \cH_{A}^{\ast} \to \cH_{A}^{\ast}$. As the product of two Hilbert-Schmidt operators, $S_{\ket{\phi}}^* S_{\ket{\psi}}$ is trace class. It has vanishing trace because \(\langle \phi \vert \psi \rangle = \tr(S_\ket{\phi}^* S_{\ket\psi}) = 0.\) We observe, crucially, that for each $\bra{f}_A, \bra{f'}_A \in \cH_{A}^{\ast}$, it holds that 
\begin{align}\label{eq:decomp-to-cond-states}
S_{\ket{\phi}}^* S_{\ket{\psi}} (\bra{f}_A) \ket{f'}_A = \langle \phi^{\vert f'} \vert \psi^{\vert f} \rangle \ . 
\end{align} 
Here, the left hand side is the inner product of \(S_{\ket{\phi}}^* S_{\ket{\psi}} (\bra{f}_A)\) and \(\bra{f'}_A\). 
By Lemma \ref{lem:trace-zero}, there exists an orthonormal basis $\{\bra{e}_A\}_{e}$ for \(\cH_A^*\) such that \(S_{\ket{\phi}}^* S_{\ket{\psi}} (\bra{e}_A) \ket{e}_A = 0\) for all $e$. By Eq.~\eqref{eq:decomp-to-cond-states}, the dual basis \(\{\ket{e}_A\}_{e}\) satisfies $\langle \phi^{e} \vert \psi^{e} \rangle = 0$ for all $e$. 
\end{proof}

When the dimension of \(\cH_A\) is infinite, it can occur that Alice's message in the above protocol is a random variable with infinite support. This means the above protocol cannot be implemented in general if Alice is constrained to send a finite number of bits. However, for each $\ve > 0$, there exists a finite subset $G_{\ve} \subseteq \cE_{A}$ such that the probabilities \(\tr( \sum_{\ket{e} \in G_{\ve}} \dyad{e}_A \otimes \mathds{1}_B \dyad{\psi}_{AB})\) and \(\tr( \sum_{\ket{e} \in G_{\ve}} \dyad{e}_A \otimes \mathds{1}_B \dyad{\phi}_{AB})\) are both at least \( 1-\ve\). After performing her measurement, Alice may send a message to Bob that specifies the outcome when it is in $G_{\ve}$, and otherwise is a symbol that signifies that an outcome in the complement $\cE_{A} \setminus G_{\ve}$ occurred. Such a message requires no more than $\lceil \log_{2}(\vert G_{\ve} \vert) \rceil + 1$ bits. When an outcome in $G_{\ve}$ is observed, Bob is capable of perfectly discriminating between the arising two conditional states of system \(B\). We summarize these conclusions in the following corollary.

\begin{corollary}
\label{cor: finite bits disc}
For every pair of orthogonal unit vectors $\ket{\psi}_{AB},\ket{\phi}_{AB} \in \cH_{AB}$ and each \(\ve > 0\), there exists a one-way LOCC binary discrimination protocol that distinguishes the pair with success probability at least $1-\ve$ and requires a finite number of bits (depending on \(\ket{\psi}_{AB},\ket{\phi}_{AB}\) and \(\ve\)) in classical communication. 
\end{corollary}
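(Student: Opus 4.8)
The plan is to reuse the perfect protocol constructed in the proof of Theorem~\ref{thm:walgate-et-al-thm}, but to coarse-grain the classical message Alice sends so that only finitely many symbols are needed. I would fix the orthonormal basis $\mathcal{E}_A = \{\ket{e}_A\}_e$ of $\cH_A$ guaranteed by that theorem, for which $\langle \phi^{\vert e}\vert\psi^{\vert e}\rangle = 0$ for every $e$, and keep Alice's full projective measurement $(\dyad{e}_A)_{\ket{e}_A\in\mathcal{E}_A}$ unchanged; the only modification is in what she transmits. The governing idea is that the squared norms $\|\ket{\psi^{\vert e}}_B\|^2$ and $\|\ket{\phi^{\vert e}}_B\|^2$ of the conditional states are exactly the outcome probabilities of Alice's measurement under the two hypotheses, so both families are summable to $1$ and almost all of the probability weight sits on a finite subset of outcomes.

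Concretely, I would first record that the decomposition in Eq.~\eqref{eq:cond-state-rep}, together with orthonormality of $\mathcal{E}_A$, gives $\sum_e \|\ket{\psi^{\vert e}}_B\|^2 = \langle\psi\vert\psi\rangle = 1$ and likewise $\sum_e \|\ket{\phi^{\vert e}}_B\|^2 = 1$. Since each series converges, for the given $\ve>0$ I can choose finite sets $G^\psi_\ve, G^\phi_\ve \subseteq \mathcal{E}_A$ carrying weight at least $1-\ve$ under each distribution and set $G_\ve := G^\psi_\ve \cup G^\phi_\ve$, which is finite and satisfies both $\sum_{\ket{e}\in G_\ve}\|\ket{\psi^{\vert e}}_B\|^2 \geq 1-\ve$ and $\sum_{\ket{e}\in G_\ve}\|\ket{\phi^{\vert e}}_B\|^2 \geq 1-\ve$. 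Alice then sends the label of her outcome when it lies in $G_\ve$ and a single distinguished ``failure'' symbol otherwise; this message requires at most $\lceil\log_2(\vert G_\ve\vert)\rceil + 1$ bits, so the classical communication is finite.

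For the success analysis, suppose the true state is $\dyad{\psi}_{AB}$. With probability at least $1-\ve$ Alice's outcome $e$ lies in $G_\ve$, and conditioned on such $e$ the $B$-system state is proportional to $\ket{\psi^{\vert e}}_B$, which lies in the range of $P^{\psi\vert e}_B$; hence Bob's measurement $(P^{\psi\vert e}_B,\, \mathds{1}_B - P^{\psi\vert e}_B)$ identifies $\psi$ with certainty and he guesses correctly. The symmetric argument, using the orthogonality $\langle\phi^{\vert e}\vert\psi^{\vert e}\rangle = 0$ to place $\ket{\phi^{\vert e}}_B$ in the kernel of $P^{\psi\vert e}_B$, shows that when the true state is $\dyad{\phi}_{AB}$ Bob again guesses correctly whenever $e\in G_\ve$. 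On the failure symbol Bob outputs an arbitrary default guess; since that symbol occurs with probability at most $\ve$ under either hypothesis, the guess is correct with probability at least $1-\ve$ in both cases.

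The only nontrivial point is the existence of the finite set $G_\ve$, and I expect this to be the main thing to get right: it rests entirely on recognizing the squared conditional-state norms as a sequence of measurement probabilities summing to one, so that a finite truncation discards at most $\ve$ of the weight under each hypothesis. Everything else is a direct restriction of the perfect protocol of Theorem~\ref{thm:walgate-et-al-thm} to finitely many messages, and the resulting operator is still of the one-way LOCC form in Eq.~\eqref{eq: locc op}.
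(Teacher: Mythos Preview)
Your proposal is correct and follows essentially the same approach as the paper: coarse-grain Alice's message to a finite set $G_\ve$ of outcomes carrying at least $1-\ve$ weight under both hypotheses, send a failure symbol otherwise, and let Bob discriminate perfectly on the conditional states when $e\in G_\ve$. You are slightly more explicit than the paper in constructing $G_\ve$ as a union and in specifying Bob's default behavior on the failure symbol, but the argument and even the bit count $\lceil\log_2(|G_\ve|)\rceil+1$ match.
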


We close this section by remarking that, as noted in \cite[Section IV.A]{Walgate-2000a} for the finite dimensional case, Theorem \ref{thm:walgate-et-al-thm} implies that it is possible to perfectly discriminate between any two infinite-dimensional multipartite orthogonal states by spatially separated parties using one-way LOCC. This is done by iteratively treating multipartite states as bipartite states. To illustrate, let us consider two tripartite orthogonal states $\ket{\psi}_{ABC}$ and $\ket{\phi}_{ABC}$. First, Alice measures system $A$ in an appropriate basis so that for each outcome $i$, the conditional states $\ket{\psi^{\vert i}}_{BC}$ and $\ket{\phi^{\vert i}_{BC}}$ are orthogonal. Then, upon receiving outcome $i$, Bob measures the $B$ system in an appropriate basis so that conditioned on each outcome $j$ of his measurement, the conditional states $\ket{\psi^{\vert i,j}}_{C}$ and $\ket{\phi^{\vert i,j}}_{C}$ are orthogonal. Thus, Charlie, the third party, can perfectly discriminate between each pair of conditional states that could arise.

\section{An efficient algorithm for finding optimal protocols in finite dimensions}
\label{sec:complexity}

The proof of Theorem~\ref{thm:walgate-et-al-thm} is an existence proof. It crucially relies on the fact that for a given zero-trace operator, \textit{there exists} a basis where the operator has zeros on the diagonal; without determining such a basis, we cannot implement the optimal one-way LOCC protocol outlined in the proof. To improve upon this situation, we present an algorithm for determining such a basis in the case where \(\cH_{AB}\) is finite-dimensional. Our algorithm has time complexity $O(d_{A}^{2}d_{B}^{2})$, takes as input classical descriptions of the two states, i.e.~their amplitudes in some product basis, and outputs a unitary and a set of states that specify Alice's and Bob's measurements. This is formally stated in Theorem \ref{thm:complexity-result}.

Before we present the technical details, we provide the high-level idea and explain how this relates to Walgate et al.'s work~\cite{Walgate-2000a}. As a first step, we reduce the task of determining a basis for Alice's measurement to the task of determining a unitary $U$ that converts a trace zero matrix encoding the overlaps of the conditional states into a matrix with zeros on the diagonal (Proposition \ref{prop:reduction-to-finding-a-unitary}). We also show how to efficiently construct this overlap matrix (Proposition \ref{prop:construct-overlap-matrix}). As a second step, we provide an efficient algorithm for determining said $U$ (Proposition \ref{prop:complexity-of-flattening}). Our first step is the same as \cite[Theorem 1]{Walgate-2000a}, but our proof is simpler, clarifies that work's observed `curious transformations' are an artifact of a standard vectorization map identity, and allows us to show the relevant mathematical object is efficient to construct. For our second step, we recast the quantum circuit argument in \cite[Section III]{Walgate-2000a} as a classical algorithm and establish efficiency of said algorithm.

We now turn to the technical details. Throughout this section, we use the notation $[k] = \{0,1,...,k-1\}$. We also assume that \(\dim(\cH_A)\) and \(\dim (\cH_{B})\) are finite, and, without loss of generality, that \(\dim(\cH_A) \leq \dim(\cH_B)\) so that we focus on an Alice to Bob one-way LOCC protocol. 

As we are interested in constructing an algorithm, we need some guaranteed structure on how the quantum states $\ket{\psi},\ket{\phi}$ are provided as input data. We take as input the amplitudes of $\ket{\psi},\ket{\phi}$ in a specific orthonormal product basis $\{\ket{i}_{A}\ket{j}_{B}\}_{i \in [d_{A}],j \in [d_{B}]}$. We make use of the `$\opvec$ mapping' defined via this product basis: 
$$\opvec(\ket{j}_{B}\bra{i}_{A}) := \ket{i}_{A}\ket{j}_{B}, \forall i \in [d_A], j \in [d_B].$$ 
This vec mapping is a (basis-dependent) isomorphism between operators from $\cH_{A}$ to $\cH_{B}$ and vectors in $\cH_{AB}$. We use it for three reasons. First, by defining $M_{\ket{\psi}} \coloneq \opvec^{-1}[\ket{\psi}]$ for each $\ket{\psi}_{AB} \in \cH_{AB}$, a direct calculation using Eq.~\eqref{eq:cond-state-rep} and the definition of the $\opvec$ mapping establishes that
\begin{align}\label{eq:vec-map-overlap-relation}
    M_{\ket{\phi}}^{\ast}M_{\ket{\psi}} = \sum_{i,i'} \langle \phi^{\vert i'} \vert \psi^{\vert i} \rangle \ket{i'}\bra{i} \ .
\end{align}
This shows $M_{\ket{\phi}}^{\ast}M_{\ket{\psi}}$ encodes the needed overlap information simply. Second, the vec mapping takes constant time to implement. This is because a multidimensional array is a list of numbers with appended data specifying how to parse when a new row starts. Implementing the vec mapping only modifies the parsing data and thus is dimension independent. Combining this fact with that multiplying a $p \times m$ matrix by an $m \times n$ matrix is $O(mnp)$, we conclude the following.
\begin{proposition}\label{prop:construct-overlap-matrix}
    Given vectors $\ket{\psi}_{AB},\ket{\phi}_{AB} \in \cH_{AB}$ expressed in the orthonormal product basis $\{\ket{i}_{A}\ket{j}_{B}\}_{i \in [d_{A}],j \in [d_{B}]}$, the matrix presentation of the operator $M_{\ket{\phi}}^{\ast}M_{\ket{\psi}}$ in the basis $\{\ket{i}_{A}\}_{i \in [d_{A}]}$ can be constructed in $O(d_{A}^{2}d_{B})$ time.
\end{proposition}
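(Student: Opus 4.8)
The plan is to establish Proposition~\ref{prop:construct-overlap-matrix} by tracking the cost of each elementary step in the construction, using the fact that the $\opvec$ mapping is essentially free (constant time) and that the dominant cost is a single matrix multiplication. First I would unpack what the operator $M_{\ket{\phi}}^{\ast}M_{\ket{\psi}}$ actually is as a matrix. By the definition $M_{\ket{\psi}} = \opvec^{-1}[\ket{\psi}]$, applying $\opvec^{-1}$ reshapes the length-$d_A d_B$ amplitude vector of $\ket{\psi}$ into a $d_B \times d_A$ matrix whose $(j,i)$ entry is the amplitude of $\ket{i}_A\ket{j}_B$; likewise for $\ket{\phi}$. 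I would then argue, as asserted in the text, that this reshaping only manipulates the array's parsing metadata rather than copying or arithmetic on the $d_A d_B$ entries, so it takes constant time independent of dimension.

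The core of the argument is then a counting step. I would observe that $M_{\ket{\psi}}$ and $M_{\ket{\phi}}$ are $d_B \times d_A$ matrices, so $M_{\ket{\phi}}^{\ast}$ is $d_A \times d_B$ and forming the conjugate transpose is at most $O(d_A d_B)$ (indeed also just a metadata/conjugation operation). The product $M_{\ket{\phi}}^{\ast}M_{\ket{\psi}}$ is the multiplication of a $d_A \times d_B$ matrix by a $d_B \times d_A$ matrix, which by the standard cost of multiplying a $p \times m$ matrix by an $m \times n$ matrix ($O(mnp)$) is $O(d_A \cdot d_B \cdot d_A) = O(d_A^2 d_B)$. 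Since this matrix-multiplication cost dominates the constant-time reshaping and the $O(d_A d_B)$ transposition, the total time is $O(d_A^2 d_B)$, which is the claimed bound. I would also note that Eq.~\eqref{eq:vec-map-overlap-relation} certifies that the matrix so constructed is exactly the overlap operator $\sum_{i,i'} \langle \phi^{\vert i'} \vert \psi^{\vert i} \rangle \ket{i'}\bra{i}$ we want, so correctness and the complexity claim are simultaneously addressed.

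The main obstacle, such as it is, is not mathematical depth but precision about the computational model: one must justify that $\opvec^{-1}$ really is dimension-independent rather than $O(d_A d_B)$, since a naive implementation that copies the array would incur a linear cost (still absorbed into the bound, but conceptually worth getting right). I would lean on the paper's stated reasoning that a multidimensional array is stored as a flat list together with shape/stride metadata, so reinterpreting a vector as a matrix only edits that metadata. Provided one accepts this model, the proof is a short and routine accounting of the dominant matrix multiplication, and no delicate estimate is required.
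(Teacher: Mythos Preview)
Your proposal is correct and follows essentially the same approach as the paper: the paper's justification, given in the paragraph immediately preceding the proposition, is precisely that the $\opvec$ mapping is constant time (metadata only) and that the product of a $d_A \times d_B$ matrix with a $d_B \times d_A$ matrix costs $O(d_A^2 d_B)$. Your additional remarks about the transpose cost and the computational-model caveat are fine elaborations but not departures from the paper's argument.
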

\noindent Finally, it is straightforward to establish the relationship between the unitary that makes $M_{\phi}^{\ast}M_{\psi}$ have zeros on the diagonal and the basis in which Alice should measure her system.

\begin{proposition}\label{prop:reduction-to-finding-a-unitary}
Let $U$ be a unitary on \(\cH_A\) with the property that $UM^{\ast}_{\ket{\phi}}M_{\ket{\psi}}U^{\ast}$ has zeros on the diagonal. If Alice performs the projective measurement $(U^{T}\dyad{i}\overline{U})_{i \in [d_A]}$, then for each outcome $i \in [d_{A}]$, the arising conditional states of Bob's system are orthogonal.
\end{proposition}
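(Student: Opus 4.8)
The plan is to compute Bob's conditional states for each of Alice's outcomes explicitly and then read off their pairwise overlap directly from the hypothesized diagonal-zero structure. First I would record that the operators $(U^{T}\dyad{i}\overline{U})_{i \in [d_A]}$ genuinely form a projective measurement: since $\overline{U} = (U^{T})^{*}$ and $U^{T}$ is unitary whenever $U$ is, these are precisely the rank-one projectors onto the orthonormal basis $\{\ket{e_i} := U^{T}\ket{i}\}_{i \in [d_A]}$, which resolve the identity on $\cH_A$. So the content of the proposition is that measuring $A$ in the basis $\{\ket{e_i}\}_i$ produces orthogonal conditional states on $B$.

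Next I would express the conditional states through the vec map so that the hypothesis on $U M_{\ket{\phi}}^{*} M_{\ket{\psi}} U^{*}$ becomes usable. When Alice obtains outcome $i$, the conditional state of $B$ is $\ket{\psi^{\vert e_i}}_{B} = S_{\ket{\psi}}(\bra{e_i}_A)$, and analogously for $\phi$, where $\bra{e_i} = \bra{i}\overline{U}$. Combining the definition $M_{\ket{\psi}} = \opvec^{-1}[\ket{\psi}]$ with the decomposition \eqref{eq:cond-state-rep} shows that $S_{\ket{\psi}}(\bra{k}_A) = M_{\ket{\psi}}\ket{k}$ on computational basis vectors; extending by linearity of $S_{\ket{\psi}}$ in the bra argument, applying $\bra{e}_A \otimes \mathds{1}_B$ to $\ket{\psi}$ amounts to applying $M_{\ket{\psi}}$ to the complex conjugate of $\ket{e}$. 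Concretely, since $\overline{\ket{e_i}} = \overline{U^{T}}\ket{i} = U^{*}\ket{i}$, I expect to obtain $\ket{\psi^{\vert e_i}}_{B} = M_{\ket{\psi}} U^{*}\ket{i}$ and $\ket{\phi^{\vert e_i}}_{B} = M_{\ket{\phi}} U^{*}\ket{i}$, where the appearance of $U^{*}$ (rather than $U^{T}$ or $\overline{U}$) is exactly the artifact of this conjugation.

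With the conditional states in hand, the conclusion is a one-line read-off. I would then compute, for each outcome $i$, the overlap
\begin{align*}
\langle \phi^{\vert e_i} \vert \psi^{\vert e_i} \rangle = \bra{i} U M_{\ket{\phi}}^{*} M_{\ket{\psi}} U^{*} \ket{i} = \big( U M_{\ket{\phi}}^{*} M_{\ket{\psi}} U^{*} \big)_{ii},
\end{align*}
which vanishes for every $i \in [d_A]$ by the hypothesis that $U M_{\ket{\phi}}^{*} M_{\ket{\psi}} U^{*}$ has zeros on the diagonal. Hence the two conditional states arising from each outcome are orthogonal, as claimed.

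I expect the only real obstacle to be the bookkeeping of transposes and complex conjugates in the second step: tracking how the basis change $\ket{i} \mapsto U^{T}\ket{i}$ on Alice's side, together with the antilinear character of the conditional-state correspondence $\bra{e} \mapsto \overline{\ket{e}}$, conspires so that the overlap matrix $M_{\ket{\phi}}^{*} M_{\ket{\psi}}$ of \eqref{eq:vec-map-overlap-relation} transforms by the genuinely linear conjugation $M_{\ket{\phi}}^{*} M_{\ket{\psi}} \mapsto U M_{\ket{\phi}}^{*} M_{\ket{\psi}} U^{*}$. This is precisely where the transpose and conjugate in the measurement operators earn their keep, and it explains the `curious transformations' noted in \cite{Walgate-2000a}; everything else follows immediately from \eqref{eq:vec-map-overlap-relation}.
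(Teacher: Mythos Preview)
Your proposal is correct and takes essentially the same approach as the paper: both arguments establish that $\langle \phi^{\vert e_i}\vert \psi^{\vert e_i}\rangle = (UM^{\ast}_{\ket{\phi}}M_{\ket{\psi}}U^{\ast})_{ii}$ and then invoke the diagonal-zero hypothesis. The only cosmetic difference is that the paper packages your antilinear conditional-state computation via the vec identity $\opvec(XYZ)=(Z^{T}\otimes X)\opvec(Y)$, deducing $M_{\ket{\psi}}U^{\ast}=M_{(\overline{U}\otimes\mathds{1})\ket{\psi}}$ and then appealing to \eqref{eq:vec-map-overlap-relation} for the rotated states measured in the computational basis, whereas you compute $\ket{\psi^{\vert e_i}}_B = M_{\ket{\psi}}U^{\ast}\ket{i}$ directly.
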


\begin{proof}
Recall the identity
\begin{align}\label{eq:vec-identity}
        \opvec(XYZ) = (Z^{T} \otimes X)\opvec(Y)
\end{align} 
for any matrices $X,Y,Z$ such that \(Y\) and $XYZ$ are in the domain of \(\opvec\). Using this identity with $X = \mathds{1}$, $Z = U$, and $Y = M_{\psi}$, along with the fact that \(\opvec\) is a bijection, we see that $M_{\psi}U^{\ast} = M_{\ol{U} \otimes \mathds{1} \ket{\psi}}$. Thus, $UM^{\ast}_{\ket{\phi}}M_{\ket{\psi}}U^{\ast} = M^{\ast}_{\ol{U} \otimes \mathds{1} \ket{\phi}}M_{\ol{U} \otimes  \mathds{1} \ket{\psi}}$; since this operator is assumed to have zeros on the diagonal, Eq.~\eqref{eq:vec-map-overlap-relation} implies that the conditional states of $\ol{U} \otimes \mathds{1} \ket{\psi}$ and $\ol{U} \otimes \mathds{1} \ket{\phi}$ arising from Alice measuring in the basis $\{\ket{i}_{A}\}_{i \in [d_{A}]}$ are orthogonal. This is equivalent to the claim. 
\end{proof}

We now turn to showing there is an efficient algorithm for finding the unitary $U$ needed in Proposition~\ref{prop:reduction-to-finding-a-unitary}. We begin by providing an algorithm for determining $U$ when $\cH_{A} = \mbb{C}^{2}$. This is the construction from \cite[Example 2.2.3]{Horn-Johnson-book} expressed algorithmically.
\begin{algorithm}
    Computing flattening $U$ for $\mbb{C}^{2 \times 2}$ \label{alg:qubit-unitary} \\[-3mm]
    \begin{algorithmic}[1]
    \Procedure{Uflat}{$M$}
    \State $\wt{M} \gets M - \Tr[M] \mathds{1}_{\mbb{C}^{2}}/2$
    \State $(\{\lambda_{0},\lambda_{1}\},\{\ket{w_{0}},\ket{w_{1}}\}) \gets \text{eig}(\wt{M})$
    \If{$\lambda_{0} = 0$}
        \State $\ket{u} \gets \ket{w_{0}}$
        \State $\ket{v} \gets \ket{w^{\perp}_{0}},$ where $ \braket{w^{\perp}_{0} | w_0} = 0, || w^{\perp}_{0} || = 1$
    \Else 
        \State Compute $\langle w_1 \vert w_0 \rangle$
        \State $\phi \gets \arg(\langle w_1 \vert w_0 \rangle)$
        \State $\ket{x_\pm} = e^{-i\phi}\ket{w_0} \pm \ket{w_1}$
        \State $(\ket{u},\ket{v}) \gets (\ket{x_{+}}/ || \ket{x_{+}} ||,\ket{x_{-}}/ || \ket{x_{-}} ||)$
    \EndIf
    \State Return $U \gets \begin{pmatrix} \ket{u} & \ket{v} \end{pmatrix}$.
    \EndProcedure 
    \end{algorithmic}
\end{algorithm}

\begin{proposition}\label{prop:qubit-flattening-unitary}
On input $M \in \mbb{C}^{2 \times 2}$, Algorithm \ref{alg:qubit-unitary} returns a unitary $U \in \mbb{C}^{2 \times 2}$ such that the two diagonal entries of $U^{\ast} M U$ are equal.
\end{proposition}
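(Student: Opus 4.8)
The plan is to strip off the trace and reduce the claim to a statement about the traceless part $\wt M = M - \tfrac{\tr(M)}{2}\mathds{1}$ computed on the first line of the algorithm. Conjugation by any unitary gives $U^* M U = U^*\wt M U + \tfrac{\tr(M)}{2}\mathds{1}$, so the two diagonal entries of $U^* M U$ are equal if and only if the two diagonal entries of $U^*\wt M U$ are equal. Because $\wt M$ is traceless and unitary conjugation preserves the trace, $U^*\wt M U$ is traceless, so its diagonal entries sum to zero; they are equal precisely when both vanish. Hence it suffices to exhibit one column $\ket{u}$ of $U$ with $\bra{u}\wt M\ket{u} = 0$, and to check separately that $U$ is unitary (orthonormal columns), after which tracelessness forces the remaining diagonal entry to vanish as well.

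In the branch $\lambda_0 = 0$, tracelessness makes the eigenvalues $\lambda_0$ and $-\lambda_0$, so $\lambda_0 = 0$ forces both to be zero and $\wt M\ket{w_0} = 0$. Then $\ket{u} = \ket{w_0}$ gives $\bra{u}\wt M\ket{u} = 0$ directly, while $\ket{v} = \ket{w_0^\perp}$ is by construction a unit vector orthogonal to $\ket{u}$, so $U$ is unitary; the reduction above completes this case.

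The substance is in the branch $\lambda_0 \neq 0$, where the eigenvalues $\pm\lambda_0$ are distinct and $\wt M$ is diagonalizable with normalized, linearly independent eigenvectors $\ket{w_0},\ket{w_1}$. The key observation I would use is that $\ket{x_\pm} = e^{-i\phi}\ket{w_0} \pm \ket{w_1}$ satisfy $\wt M\ket{x_+} = \lambda_0\ket{x_-}$ by linearity, whence $\bra{x_+}\wt M\ket{x_+} = \lambda_0\langle x_+|x_-\rangle$. Everything then rests on showing $\langle x_+|x_-\rangle = 0$: expanding the inner product, the normalized terms $\langle w_0|w_0\rangle$ and $\langle w_1|w_1\rangle$ cancel, and the cross terms reduce to $-e^{i\phi}\langle w_0|w_1\rangle + e^{-i\phi}\langle w_1|w_0\rangle$, which vanishes precisely because the choice $\phi = \arg\langle w_1|w_0\rangle$ makes $\langle w_1|w_0\rangle = r e^{i\phi}$, so the two terms are $-r$ and $+r$. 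This single computation does double duty: it yields $\bra{u}\wt M\ket{u} = 0$ after normalization, and the orthogonality $\ket{x_+}\perp\ket{x_-}$ that makes $U$ unitary.

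The one point I expect to require care is the well-definedness of the normalizations $\ket{x_\pm}/\|x_\pm\|$. From the same expansion I would obtain $\|x_\pm\|^2 = 2 \pm 2r$, and then invoke the strict Cauchy--Schwarz inequality: since $\ket{w_0},\ket{w_1}$ are non-parallel (distinct eigenvalues), $r = |\langle w_1|w_0\rangle| < 1$, so both norms are strictly positive. With $U$ unitary and $\bra{u}\wt M\ket{u} = 0$, the tracelessness argument delivers equal diagonal entries of $U^*\wt M U$, and hence equal diagonal entries of $U^* M U$, as claimed.
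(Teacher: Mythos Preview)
Your proof is correct and follows essentially the same route as the paper's: reduce to the traceless part $\wt M$, split on whether $\lambda_0=0$, and in the nontrivial branch show that the phase choice $\phi=\arg\langle w_1|w_0\rangle$ forces $\langle x_+|x_-\rangle=0$. Your packaging is slightly tidier---the identity $\wt M\ket{x_+}=\lambda_0\ket{x_-}$ makes one computation serve for both $\bra{u}\wt M\ket{u}=0$ and orthogonality, and you explicitly verify $\|x_\pm\|^2=2\pm 2r>0$ via strict Cauchy--Schwarz, a point the paper leaves implicit.
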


\begin{proof}
Denote \(\wt{M} := M - \tr(M) \mathds{1}_{\mbb{C}^2}/2\). If $\ket{u}$ and $\ket{v} \in \mbb{C}^2$ are orthonormal such that $\bra{u}\wt{M}\ket{u}=0$, then the unitary $U := \begin{pmatrix} \ket{u} & \ket{v} \end{pmatrix}$ is such that the two diagonal entries of $U^{\ast}MU$ are both equal to $\Tr[M]/2$. It thus suffices to see Algorithm \ref{alg:qubit-unitary} outputs such $\ket{u}$ and $\ket{v}$. We assume $\text{eig}(\cdot)$ returns eigenvalues $\lambda_{0},\lambda_{1}$ corresponding normalized right eigenvectors $\ket{w_{0}}$, $\ket{w_{1}}$ as can be implemented with textbook numerical methods, see e.g.~\cite{stewart2022numerical}. Note $\ket{w_{0}},\ket{w_{1}}$ need not constitute a complete eigenbasis. If $\lambda_{0} = 0$, then $\bra{w_{0}}\wt{M}\ket{w_{0}} = 0$ and so \(\ket{w_{0}}\) and any vector \(\ket{w_0^{\perp}}\) defined by \(\braket{w^\perp_0|w_{0}} = 0\) and \(|| w^\perp_0 || =1\) suffice. If $\lambda_{0} \neq 0$, then \(\ket{w_1}\) is linearly independent of \(\ket{w_0}\). A direct calculation shows $\bra{x_{+}}\wt{M}\ket{x_{+}} = \lambda_{0}2i \Im(e^{-i \phi}\langle w_1 \vert w_0 \rangle)$. However, we solved for $\phi$ such that $\langle w_1 \vert w_0 \rangle = e^{i\phi}\vert \langle w_1 \vert w_0 \rangle \vert$, so $\Im(e^{-i \phi}\langle w_1 \vert w_0 \rangle) = 0$. Thus, $\bra{x_{0}}\wt{M}\ket{x_{0}} = 0 = \bra{u}\wt{M}\ket{u}$. Moreover, $\langle u \vert v \rangle$ is proportional to $\langle x_{+} \vert x_{-} \rangle = 2\Im(e^{-i \phi}\langle w_1 \vert w_0 \rangle)$, so $\ket{u}$ and $\ket{v}$ are orthogonal.
\end{proof}

We now use the two-dimensional case as a subroutine for finding the flattening unitary for matrices in \(\mbb{C}^{d \times d}\). The way this is implemented is most easily understood by the visualization in Fig.~\ref{fig:alg-visualization}. The idea is to iteratively use Algorithm \ref{alg:qubit-unitary} on two-dimensional subspaces until all entries of the matrix are equal. By choosing the two-dimensional subspaces so that as many instances of Algorithm~1 can be performed in parallel as possible, the total flattening unitary is built using $\lceil \log_{2}(d) \rceil$ layers of unitaries as shown in Fig.~\ref{fig:alg-visualization}. This idea is formalized as Algorithm \ref{alg:computing-unitary}. In Ref.~\cite{Walgate-2000a}, the same iterative argument is made to prove the existence of the flattening unitary. In contrast, as we already know the flattening unitary exists (Lemma \ref{lem:trace-zero}), our analysis shows we can determine this unitary efficiently using a classical computer. 
We also embed the two-dimensional unitaries of an iteration into a single unitary (Line 11 of Algorithm \ref{alg:computing-unitary}). This embedding allows us to only use matrix multiplication $\log(d)$ rather than $\frac{d}{2}\log(d)$ times.

\begin{figure}
    \centering
    \includegraphics[width=\linewidth]{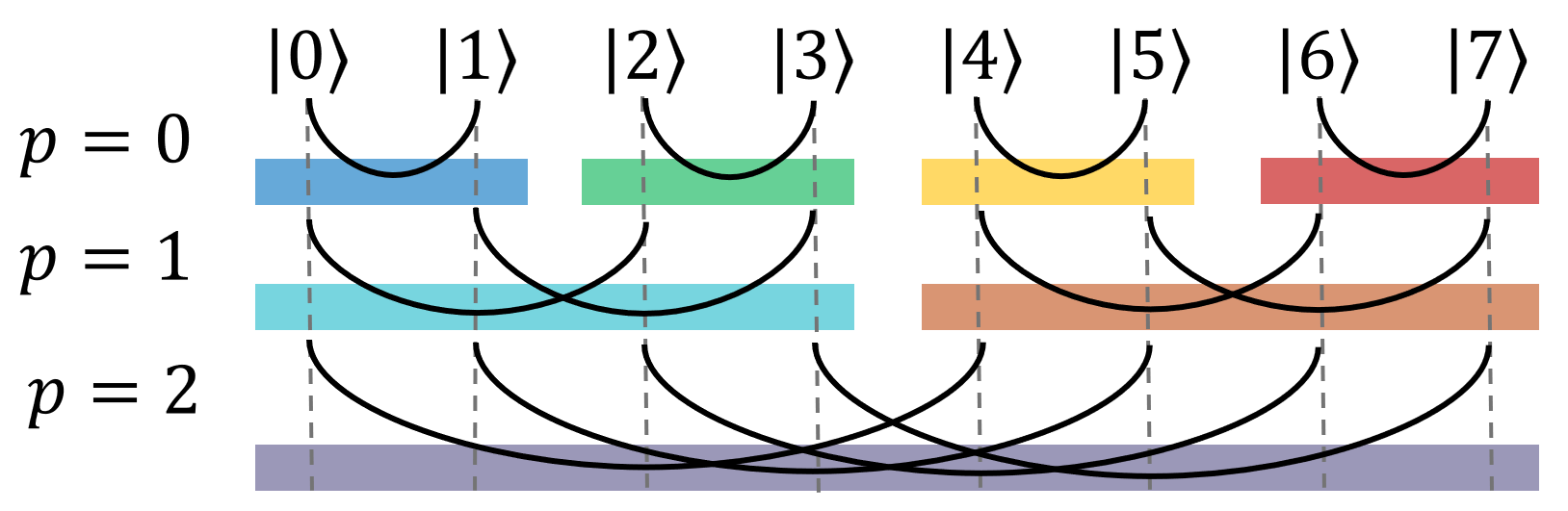}
    \caption{\footnotesize Visualization of Algorithm \ref{alg:computing-unitary} for $d$ such that $\lceil \log_2 (d) \rceil=3$. Black lines depict the pairs of computational basis states used to define the diagonal entries being flattened at iterate $p$. Blocks of colour denote diagonal elements of $M$ with the same value after iterate $p$. Because at each iterate pairs of constant value blocks are combined, the number of distinct diagonal values is cut in half each iteration. As such it takes $\lceil \log_2 (d) \rceil$ iterations (three in this case).}
    \label{fig:alg-visualization}
\end{figure} 

\begin{proposition}\label{prop:complexity-of-flattening}
On input $M \in \mbb{C}^{d \times d}$ where $d \geq 2$, Algorithm~\ref{alg:computing-unitary} returns a unitary $U$ in time $O(d^{3}\log(d))$ such that the diagonal entries of $UMU^{\ast}$ are equal. 
\end{proposition}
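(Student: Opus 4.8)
The plan is to prove the two assertions---correctness (the diagonal of $UMU^{\ast}$ is constant) and the $O(d^{3}\log d)$ running time---together by induction on the layer index, with Proposition~\ref{prop:qubit-flattening-unitary} supplying the atomic operation. The first thing I would isolate is a \emph{locality} lemma for conjugation by a unitary supported on a single coordinate pair: if $V$ acts as the identity outside $\linspan\{\ket{i},\ket{j}\}$, then for every $k\notin\{i,j\}$ one has $\bra{k}VMV^{\ast}\ket{k}=\bra{k}M\ket{k}$, because $V^{\ast}\ket{k}=\ket{k}$, while the $2\times 2$ principal submatrix of $M$ indexed by $\{i,j\}$ transforms exactly as in the two-dimensional problem. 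Feeding that submatrix to Algorithm~\ref{alg:qubit-unitary} and conjugating therefore leaves every diagonal entry untouched except $M_{ii}$ and $M_{jj}$, which are both replaced by their average $(M_{ii}+M_{jj})/2$. Since the coordinate pairs chosen within one layer are disjoint, the associated $2\times 2$ unitaries act on mutually orthogonal subspaces, commute, and assemble into the single embedded unitary $V_{p}$ of Line~11 of Algorithm~\ref{alg:computing-unitary}; conjugation by $V_{p}$ acts on each pair independently.

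With the locality lemma in hand, correctness would follow from the block-doubling invariant visualized in Fig.~\ref{fig:alg-visualization}: after layer $p$ the coordinates are partitioned into groups on which the diagonal is constant, and each layer fuses the groups in pairs so that the number of distinct diagonal values is halved. Concretely, at layer $p+1$ I would pair each coordinate of one constant group with the matching coordinate of a sibling group and apply Algorithm~\ref{alg:qubit-unitary}; by the locality lemma every such pair is averaged to the common value $(v+v')/2$, so the two sibling groups merge into one group of constant diagonal value. Iterating, after $\lceil\log_2 d\rceil$ layers a single group remains, i.e.\ all diagonal entries coincide; since conjugation preserves the trace, their common value is forced to be $\Tr(M)/d$. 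The output unitary is the ordered product $U=V_{\lceil\log_2 d\rceil}\cdots V_{1}$.

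For the complexity bound I would charge each of the $\lceil\log_2 d\rceil$ layers separately. Within a layer we read off the $O(d)$ relevant $2\times 2$ principal submatrices and call Algorithm~\ref{alg:qubit-unitary} on each; by Proposition~\ref{prop:qubit-flattening-unitary} each call costs $O(1)$, and placing the resulting blocks into the embedded $d\times d$ unitary $V_{p}$ costs $O(d)$. The dominant cost is then updating the running matrix through the two dense products $V_{p} M V_{p}^{\ast}$, each $O(d^{3})$. Multiplying by the number of layers yields the advertised $O(d^{3}\log d)$; it is precisely the embedding of all of a layer's two-dimensional unitaries into one $V_{p}$ that keeps the count of matrix multiplications at $O(\log d)$ rather than $O(d\log d)$.

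The step I expect to be the main obstacle is making the doubling invariant rigorous when $d$ is not a power of two. There the groups produced along the way need not all have the same size, and a naive averaging of two groups of unequal size does not halve the number of distinct values---indeed, because a coordinate-pair gate can only replace two diagonal entries by their mean, a careless pairing can fail to drive the diagonal to a single constant at all. The crux of a fully rigorous argument is therefore to verify that the matching chosen in Algorithm~\ref{alg:computing-unitary} keeps the groups balanced, so that the distinct-value count genuinely halves at every layer and exactly one value survives after $\lceil\log_2 d\rceil$ layers; the cleanest route I would pursue is to pad the index set to the next power of two, run the balanced butterfly there, and then check that the bookkeeping descends faithfully to the original $d$ coordinates.
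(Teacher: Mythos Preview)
Your proposal is correct and follows essentially the same route as the paper: locality of the $2\times 2$ conjugation, the block-doubling invariant that halves the number of distinct diagonal values each layer, and the $O(d^{3})$-per-layer cost from the dense updates. Your ``main obstacle'' is a non-issue because Line~3 of Algorithm~\ref{alg:computing-unitary} already zero-pads $M$ into $\mbb{C}^{2^{k}\times 2^{k}}$ with $k=\lceil\log_{2}d\rceil$---exactly the padding you propose---so the butterfly always runs on a power-of-two index set (and the common diagonal value is $\Tr(M)/2^{k}$, not $\Tr(M)/d$).
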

\begin{proof}
\textbf{Correctness:}  In Line 3, Algorithm \ref{alg:computing-unitary} embeds the matrix $M$ into \(\mathbb{C}^{2^k \times 2^k}\) where \(k = \lceil \log_2 (d) \rceil\). Lines 5-15 of the algorithm then apply the two-dimensional unitary from Algorithm \ref{alg:qubit-unitary} on two-dimensional subspaces until all the diagonal entries of $M$ are equal as we explain. At iterate $p = 0$, the algorithm constructs the unitary $U^{(0)}$ that applies the two-dimensional flattening unitary to the mutually orthogonal subspaces $\{\linspan(\ket{i},\ket{i+1})\}_{i \in \{0,2,4...,2^k-2\}}$, i.e., pairing up integers in ascending order. Thus, ${U^{(0)}}^{\ast}MU^{(0)}$ will have ascending pairs of diagonal entries of equal value and thus have at most $2^{k-1}$ distinct values. At iterate $p=1$, the algorithm constructs the unitary $U^{(1)}$ that applies the two-dimensional flattening unitary to the mutually orthogonal subspaces $\{\linspan(\ket{i},\ket{i+2})\}_{i \in \{0,1,4,...,2^k-3\}}$. Because pairs of diagonal entries of ${U^{(0)}}^{\ast}MU^{(0)}$ are already of the same value, ${{U^{(1)}}^{\ast}{U^{(0)}}^{\ast}}MU^{(0)}U^{(1)}$ will have ascending sets of $4$ diagonal entries of equal value and at most $2^{k-2}$ distinct values. Continuing in this manner, the $p^{th}$ iterate of the algorithm combines integers that are $2^{p}$ far from each other to result in at most $2^{k-p}$ distinct values on the diagonal. It thus takes $k$ iterations to guarantee a single value on the diagonal.

\textbf{Complexity:} Throughout we use $2^{k} = O(d)$. Computing $k$ is $O(\log(d))$, embedding $M$ is $O(d^{2})$, and initializing $U_{\text{tot}}$ is $O(d^{2})$. For each $p \in [k]$, there are $2^{k-1} = O(d)$ pairs of subspaces. As access to an array is $O(1)$ and $\mathsf{Uflat}$ is $O(1)$ by Proposition \ref{prop:qubit-flattening-unitary}, lines 9-11 of Algorithm \ref{alg:computing-unitary} are $O(1)$, so we may conclude lines 7-13 of Algorithm \ref{alg:computing-unitary} are $O(d)$. As updating $M$ and $U_{\text{tot}}$ are $O(d^{3})$ by direct matrix multiplication, a layer of $p \in [k]$ is $O(d^{3})$. As $k = \log(d)$, we conclude the complexity of the algorithm is $O(d^3 \log (d))$.
\end{proof}

\begin{algorithm}Computing Flattening $U$ for \(\mbb{C}^{d \times d}\) \label{alg:computing-unitary} \\[-3mm]
    \begin{algorithmic}[1]
    \Procedure{UFlatGen}{$M,d$}
    \State $k \gets \lceil \log_{2}(d) \rceil$
    \State $M \gets \begin{pmatrix} M & 0 \\ 0 & 0 \end{pmatrix} \in \mbb{C}^{2^k \times 2^k}$ 
    \State $U_{\text{tot}} \gets \mathds{1}_{\mbb{C}^{2^{k}}}$ 
    \For{$p \in [k]$}
        \State $U^{(p)} \gets \mathds{1}_{\mbb{C}^{2^{k}}}$.
        \For{$i_{1} \in [2^{k-p-1}]$}
            \For{$i_{2} \in [2^{p}]$}
                \State $i \gets i_{1}\cdot2^{p+1}+i_{2}$
                \State $j \gets i + 2^{p}$
                \State $ \begin{bmatrix} U^{(p)}_{ii} & U_{ij}^{(p)} \\ U_{ji}^{(p)} & U_{jj}^{(p)} \end{bmatrix}  = \mathsf{Uflat}\left(\begin{bmatrix} M_{ii} & M_{ij} \\ M_{ji} & M_{jj} \end{bmatrix}\right)$
            \EndFor
        \EndFor
        \State $M \gets {U^{(p)}}^{\ast}M U^{(p)}$
        \State $U_{\text{tot}} \gets {U^{(p)}}^{\ast}U_{\text{tot}}$ 
    \EndFor
    \State \textbf{return} $U_{\text{tot}}$
    \EndProcedure 
    \end{algorithmic}
\end{algorithm}

Finally, we combine the above propositions to establish the main result of this section. 

\begin{theorem}
\label{thm:complexity-result}

Given a pair of orthogonal unit vectors $\ket{\psi}_{AB},\ket{\phi}_{AB} \in \cH_{AB}$ expressed in the orthonormal product basis $\{\ket{i}_{A}\ket{j}_{B}\}_{i \in [d_{A}],j \in [d_{B}]}$, the time complexity of specifying Alice's measurement for the perfect one-way LOCC binary discrimination protocol in Theorem~\ref{thm:walgate-et-al-thm} is $O(\max\{d_{A}^{3}\log(d_{A}),d_{A}^{2}d_{B}\})$. Furthermore, specifying all of Bob's subsequent measurements is $O(d_{A}^{2}d_{B}^{2})$. In total, the entire protocol can be specified in time $O(d_{A}^{2}d_{B}^{2})$. 
\end{theorem}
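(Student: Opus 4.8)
The plan is to chain the three propositions already in hand — Proposition \ref{prop:construct-overlap-matrix}, Proposition \ref{prop:complexity-of-flattening}, and Proposition \ref{prop:reduction-to-finding-a-unitary} — to specify Alice's measurement, then to read Bob's measurements off the transformed conditional states, tracking the running time throughout.

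\textbf{Alice's measurement.} First I would use Proposition \ref{prop:construct-overlap-matrix} to build the matrix presentation of \(T := M_{\ket{\phi}}^{*}M_{\ket{\psi}}\) in time \(O(d_A^2 d_B)\). Orthogonality of \(\ket{\psi}\) and \(\ket{\phi}\) gives \(\tr T = \langle \phi \vert \psi \rangle = 0\). I would then run Algorithm \ref{alg:computing-unitary} on \(T\); by Proposition \ref{prop:complexity-of-flattening} this returns, in time \(O(d_A^3 \log d_A)\), a unitary \(U\) with \(U T U^{*}\) having all equal diagonal entries. The only step requiring an extra observation is that Proposition \ref{prop:complexity-of-flattening} guarantees \emph{equal} diagonal entries while Proposition \ref{prop:reduction-to-finding-a-unitary} needs \emph{zero} diagonal entries; these coincide here because unitary conjugation preserves the trace, so the \(d_A\) equal entries of \(UTU^{*}\) sum to \(\tr T = 0\) and must each vanish. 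With \(UTU^{*}\) now zero-diagonal, Proposition \ref{prop:reduction-to-finding-a-unitary} certifies that Alice's projective measurement \((U^{T}\dyad{i}\overline{U})_{i \in [d_A]}\) produces orthogonal conditional states on Bob's side; forming \(U^{T}\) and \(\overline{U}\) costs \(O(d_A^2)\). The Alice stage therefore runs in \(O(\max\{d_A^3 \log d_A, d_A^2 d_B\})\).

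\textbf{Bob's measurements.} Conditioned on Alice's outcome \(i\), Bob implements the binary projective measurement \((P_B^{\psi\vert i}, \mathds{1}_B - P_B^{\psi\vert i})\) onto the span of his conditional state, exactly as in the proof of Theorem \ref{thm:walgate-et-al-thm}. The relevant conditional states are those of the transformed vectors \(\overline{U}\otimes\mathds{1}\ket{\psi}\) and \(\overline{U}\otimes\mathds{1}\ket{\phi}\); using the identity \(M_{\ket{\psi}}U^{*} = M_{\overline{U}\otimes\mathds{1}\ket{\psi}}\) from the proof of Proposition \ref{prop:reduction-to-finding-a-unitary} together with the decomposition \eqref{eq:cond-state-rep}, they are precisely the columns of \(M_{\ket{\psi}}U^{*}\). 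I would therefore transform the states — which costs at most \(O(d_A^2 d_B^2)\) when \(\overline{U}\otimes\mathds{1}\) is applied as a \(d_A d_B\)-dimensional operator, and only \(O(d_A^2 d_B)\) if one exploits the column structure of \(M_{\ket{\psi}}U^{*}\) — and then, for each of the \(d_A\) outcomes, form the rank-one projector onto the corresponding column together with its complement at \(O(d_B^2)\) per outcome. Either accounting places Bob's stage within \(O(d_A^2 d_B^2)\).

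\textbf{Total and main obstacle.} Summing the two stages and using \(d_A \leq d_B\), the \(O(d_A^2 d_B^2)\) contribution dominates, giving the claimed overall running time. I expect the bulk of the work to be complexity bookkeeping rather than new mathematics: the single conceptual point is the short trace-invariance argument bridging the equal-diagonal output of Proposition \ref{prop:complexity-of-flattening} to the zero-diagonal hypothesis of Proposition \ref{prop:reduction-to-finding-a-unitary}, while the main care lies in reading Bob's conditional states off the columns of \(M_{\ket{\psi}}U^{*}\) and counting the cost of forming the \(d_A\) projectors on \(\cH_B\).
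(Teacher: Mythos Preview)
Your proposal is correct and follows essentially the same route as the paper: chain Propositions \ref{prop:construct-overlap-matrix}, \ref{prop:complexity-of-flattening}, and \ref{prop:reduction-to-finding-a-unitary} for Alice, then compute Bob's conditional states as the columns of \(M_{\ket{\psi}}U^{*}\) (equivalently by applying \(\overline{U}\otimes\mathds{1}\) to \(\ket{\psi}\)) and form the rank-one projectors. The one difference is expositional rather than substantive: you make explicit the trace-invariance step bridging the ``equal diagonal'' conclusion of Proposition \ref{prop:complexity-of-flattening} to the ``zero diagonal'' hypothesis of Proposition \ref{prop:reduction-to-finding-a-unitary}, whereas the paper invokes Proposition \ref{prop:complexity-of-flattening} and immediately asserts zeros on the diagonal.
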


\begin{proof}
    By Proposition \ref{prop:construct-overlap-matrix}, we can construct the matrix representation of $M_{\ket{\phi}}^{\ast}M_{\ket{\psi}}$ in $O(d_{A}^{2}d_{B})$ time. By Proposition \ref{prop:complexity-of-flattening}, we can construct the matrix representation of the unitary $U$ such that the matrix $UM_{\ket{\phi}}^{\ast}M_{\ket{\psi}}U^{\ast}$ has zeros on the diagonal in time $O(d_{A}^{3}\log(d_{A}))$. By Proposition \ref{prop:reduction-to-finding-a-unitary}, Alice's measurement is then $\{U^{T}\dyad{i}\ol{U}\}_{i \in [d_{A}]}$.

    To further specify Bob's measurement, note that upon receiving outcome $i$ from Alice, Bob's measurement is to project on the space spanned by $\ket{\psi^{\vert \ol{U}\ket{i}}}=\bra{i}\ol{U}_A \otimes \mathds{1}_{B}\ket{\psi}$ or its orthogonal complement. Initializing the matrix $\ol{U} \otimes I_{B}$ is $O(d_{A}^{2}d_{B}^{2})$. By direct matrix multiplication, computing $\ol{U}_A \otimes \mathds{1}_{B}\ket{\psi}$ is $O(d_{A}^{2}d_{B}^{2})$. Initializing the matrix $\bra{i}_{A} \otimes \mathds{1}_{B}$ is $O(d_{A}d_{B}^{2})$ time and multiplying the terms $\left(\bra{i}_{A} \otimes \mathds{1}_{B}\right) \cdot \left(\ol{U} \otimes \mathds{1}_{B}\ket{\psi}\right)$ is $O(d_{A}d_{B}^{2})$ time. Thus, computing the set $\{\ket{\psi^{\vert \ol{U}\ket{i}}}\}_{i \in [d_{A}]}$ takes $d_{A} \cdot O(d_{A}d_{B}^{2}) = O(d_{A}^{2}d_{B}^{2})$ time. Computing $\Vert \ket{\psi^{\vert \ol{U}\ket{i}}} \Vert_{2}$ is $O(d_{B})$, so computing the normalized set $\{\ket{\psi^{\vert \ol{U}\ket{i}}}/\Vert \ket{\psi^{\vert \ol{U}\ket{i}}} \Vert_{2}\}_{i \in [d_{A}]}$ takes $O(d_{A}d_{B})$ time. Adding up the complexity of each sequential step, specifying Alice's measurement has time complexity $O(\max\{d_{A}^{3}\log(d_{A}),d_{A}^{2}d_{B}\})$ and further specifying Bob's measurement is an additional $O(d_{A}^{2}d_{B}^{2})$ time. As we assume $d_{A} \leq d_{B}$, the time complexity of the entire procedure is $O(d_{A}^{2}d_{B}^{2})$.
\end{proof}

\section{Relation to Environment-Assisted Classical Capacity}
\label{sec:env-ass-cl-cap}

We end the paper by showing the equivalence between the existence of a deterministic one-way LOCC binary discrimination protocol for orthogonal pure states (Theorem \ref{thm:walgate-et-al-thm}) and the existence of one-shot environment-assisted classical codes (Definition \ref{def:env-ass-cl-cap}) with zero error and rate of \(1\) bit per channel use. This latter statement implies that the classical capacity of environment-assisted quantum channels is at least 1 bit per channel use.

We give a brief literature review to show the value of this equivalence. Motivated by Ref.~\cite{gregoratti2004quantum}, Hayden and King proved the existence of environment-assisted classical codes achieving a zero-error rate of \(1\) bit per channel use in Ref.~\cite{hayden2004correctingquantumchannelsmeasuring}. While they note this result is implied by the Walgate et al.~result, they instead prove it using the finite-dimensional case of Lemma \ref{lem:trace-zero}. Their result is implied by the earlier result~\cite[Theorem 5]{gregoratti2003quantum}, which motivated \cite{gregoratti2004quantum}. Furthermore, the relevant part of~\cite[Theorem 5]{gregoratti2003quantum} makes use of \cite[Lemma 6]{gregoratti2003quantum}, which itself is another method of proving the finite-dimensional case of Lemma~\ref{lem:trace-zero}. This history suggests the two results are related through Lemma~\ref{lem:trace-zero}. Our aim is to understand if and why these results are related through Lemma~\ref{lem:trace-zero}.

We establish that the capacity result and the Walgate et al.~result are equivalent (Proposition~\ref{prop:equivalence}); this is because environment-assisted classical codes are a class of one-way LOCC state discrimination protocols. In other words, while the two results seem different due to terminology, when considered carefully, they both are identifying that any two orthogonal pure states are distinguishable with one-way LOCC. As such, it makes sense that both results, even when treated as distinct claims, will appeal to the same lemma.

\begin{figure}
    \centering
    \includegraphics[width=0.9\linewidth]{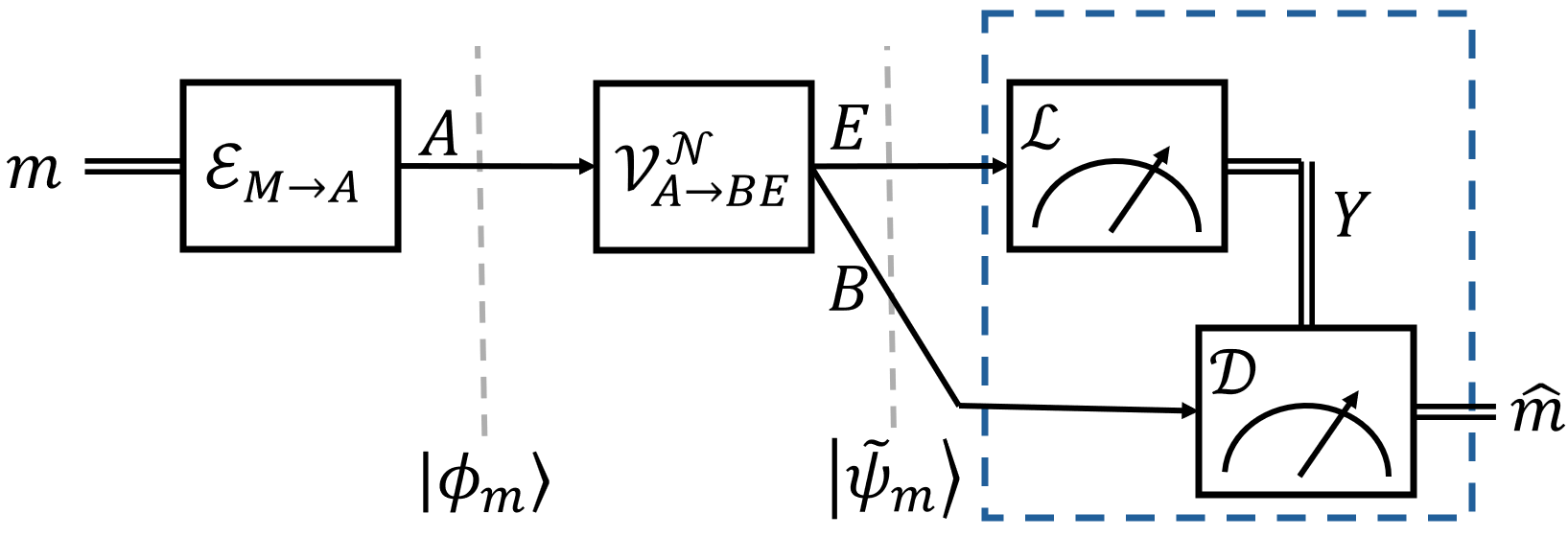}
    \caption{\footnotesize Depiction of environment-assisted classical communication and one-way LOCC discrimination. Black arrows denote quantum communication and black double-bars denote classical communication. Once the encoder \(\mathcal{E}\) is fixed, environment-assisted classical communication is a one-way LOCC state discrimination protocol with the output states generated by the isometry $\cV^{\cN}$. The one-way LOCC protocol is in the dashed blue box. For concreteness, here we depict the case where $\cE$ prepares pure states $\{\ket{\phi_{m}}\}_{m}$ reducing the problem to finding the best one-way LOCC discrimination protocol for $\{\ket{\tilde{\psi}_{m}} \coloneq \cV^{\cN}\ket{\phi_{m}}\}_{m}$.}
    \label{fig:env-ass-depiction}
\end{figure}

To prove the equivalence, we remind the reader of two basic concepts in quantum information theory, and refer them to Refs~\cite{Wilde-Book,khatri2024principlesquantumcommunicationtheory} for further background. First, the one-shot $\ve$-error classical capacity of a quantum channel $\cN_{A \to B}$ is the maximum number of bits that can be communicated over 
$\cN_{A \to B}$ with error at most $\ve$. Classical information is encoded into states of a quantum system that evolve according to $\cN$, after which the information is decoded by a measurement. Second, every quantum channel $\cN_{A \to B}$ admits a (Stinespring) isometric extension, $\cV^{\cN}_{A\to BE}(\cdot) = V (\cdot) V^{\ast}$ where $V: \cH_A \rightarrow\cH_B \otimes \cH_E$ is an isometry such that 
\[\Tr_{E} \circ \cV^{\cN}_{A \to BE} = \cN_{A \to B}.\] 
The system \(E\) here is conventionally called the environment of the channel. In environment-assisted classical coding, \(E\) is controlled by a cooperative party who assists communication from \(A\) to \(B\) by performing a measurement on system \(E\) and relaying its outcome to the receiver \(B\). See Fig.~\ref{fig:env-ass-depiction} for visualization of this communication scenario.

\begin{definition}\label{def:env-ass-cl-cap}
	 Let $\cN_{A \to B}$ be a quantum channel. A one-shot environment-assisted classical code is specified by an encoder $\cE_{M \to A}$ from an alphabet $\cM$ to the input system $A$, an isometric extension $\cV^{\cN}_{A \to BE}$,  an assisting measurement $\cL_{E \to Y}$, and a decoder $\cD_{BY \to M}$. Using the shorthand \(\Phi\) for the composition \(\cD \circ (\id_{B} \otimes \cL) \circ \cV^{\cN} \circ \cE\), the average error probability of such a code is
	 \begin{equation*}
	 	p^{\text{Env}}_{\text{err}}(\cV,\cE,\cL,\cD) \coloneq \frac{1}{\vert \cM \vert} \sum_{m \in \cM} \left(1- \bra{m}\Phi(\dyad{m})\ket{m} \right) .
	 \end{equation*}
     For each $\ve \in [0,1]$, the one-shot \(\ve\)-error environment-assisted classical capacity of the quantum channel is the supremum over all codes such that the error is less than $\ve$, i.e.,
	 \begin{equation*}
	 	C^{\ve}_{\text{Env}}(\cN)  \coloneq \sup_{\cE,\cL,\cD} \{\log \vert \cM \vert : p^{\text{Env}}_{\text{err}}(\cV,\cE,\cL,\cD) \leq \ve \},
	 \end{equation*}
     where $\cV$ is any choice of isometric extension of $\cN$.
\end{definition}
\noindent We remark the capacity is independent of the choice of isometric extension because all isometric extensions are equivalent up to an isometry on the $E$ space, which can be absorbed into the assisting measurement.

\begin{proposition}\label{prop:equivalence}
The one-shot zero-error environment-assisted capacity being at least 1 for all channels whose domain has dimension \(> 1\) is equivalent to Theorem~\ref{thm:walgate-et-al-thm}.
\end{proposition}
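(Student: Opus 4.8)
The plan is to read off the equivalence from the structural correspondence summarized in Fig.~\ref{fig:env-ass-depiction}: once the encoder $\cE$ is fixed, an environment-assisted code is nothing but a one-way LOCC binary discrimination protocol for the output states $\ket{\tilde\psi_m}=\cV^{\cN}\ket{\phi_m}$, with the assisting party holding $E$ playing the role of Alice and the receiver holding $B$ playing the role of Bob. Both sides of the claimed equivalence are therefore statements about one-way LOCC discrimination of orthogonal pure states, and I would prove the two implications separately.

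For the direction Theorem~\ref{thm:walgate-et-al-thm} $\Rightarrow$ (capacity $\geq 1$), I would take an arbitrary channel $\cN_{A\to B}$ with $\dim(\cH_A)>1$, fix any isometric extension $\cV^{\cN}_{A\to BE}$, and choose two orthonormal vectors $\ket{a_0},\ket{a_1}\in\cH_A$ (possible since the domain has dimension $>1$). Encoding $m\mapsto\ket{a_m}$ for $m\in\{0,1\}$, the output states $\ket{\tilde\psi_m}=V\ket{a_m}$ are orthonormal in $\cH_B\otimes\cH_E$ because $V$ is an isometry. Applying Theorem~\ref{thm:walgate-et-al-thm} to the bipartition $E\mid B$ yields a perfect one-way LOCC protocol in which the helper measures $E$, forwards the outcome, and $B$ decides; this is exactly a zero-error environment-assisted code with $\vert\cM\vert=2$, so $C^{0}_{\mathrm{Env}}(\cN)\geq\log_2 2 = 1$.

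For the converse direction, given an orthogonal pair $\ket{\psi}_{AB},\ket{\phi}_{AB}$ I would manufacture a channel whose output states are this pair: let the input be $\mathbb{C}^2$ and take the isometry $V:\mathbb{C}^2\to\cH_A\otimes\cH_B$ with $V\ket{0}=\ket{\psi}$ and $V\ket{1}=\ket{\phi}$, declaring $A$ to be the environment and $B$ the receiver. The domain has dimension $2>1$, so the hypothesis supplies a zero-error code with $\vert\cM\vert=2$, and by the correspondence of the previous paragraph this code unpacks into a one-way LOCC protocol (Alice measures $A$, sends her outcome, Bob measures $B$) that perfectly discriminates the two codeword states.

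The hard part is the last step of the converse. The capacity is a supremum over encoders, so the hypothesis only supplies the \emph{existence} of a good encoder; its two codewords form \emph{some} orthonormal basis $\{\ket{\psi'},\ket{\phi'}\}$ of $\linspan\{\ket{\psi},\ket{\phi}\}$, which need not be the target pair itself. Since one-way LOCC discriminability is a priori sensitive to the particular pair and not merely to its span---the norms of the conditional states $\ket{\psi^{\vert e}}_B,\ket{\phi^{\vert e}}_B$ enter---one cannot simply transport discriminability of $\{\ket{\psi'},\ket{\phi'}\}$ to $\{\ket{\psi},\ket{\phi}\}$ by a change of basis. Closing this gap is precisely where the trace-zero structure of $S^{*}_{\ket{\phi}}S_{\ket{\psi}}$ and Lemma~\ref{lem:trace-zero} must re-enter, which is the very point of the equivalence: both results ultimately rest on the same lemma. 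I would also note that the tempting shortcut of forcing the encoder by dephasing or flagging the input fails, because the which-input information then lands in the environment and the helper can read it, trivializing the task without ever touching $\{\ket{\psi},\ket{\phi}\}$.
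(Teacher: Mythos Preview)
Your forward direction (Theorem~\ref{thm:walgate-et-al-thm} $\Rightarrow$ capacity $\geq 1$) matches the paper's argument exactly.

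For the converse you go directly---build the two-dimensional channel from the target pair, invoke $C^{0}_{\mathrm{Env}}\geq 1$, and try to read off a protocol---and you correctly isolate the obstruction: the hypothesis supplies only \emph{some} zero-error code, whose encoder may select a different orthonormal basis $\{\ket{\psi'},\ket{\phi'}\}$ of $\linspan\{\ket{\psi},\ket{\phi}\}$, and one-way LOCC discriminability is not obviously invariant under change of orthonormal basis within a two-dimensional span (indeed, Alice's good basis for $\ket{00},\ket{11}$ fails for the Bell pair in that span). Your remark that flagging the input leaks which-message information to the helper (or to the receiver, if flagged on the $B$ side) is also correct. The gap is real, and repairing it by hand leads straight back to the trace-zero structure and Lemma~\ref{lem:trace-zero}.

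The paper argues the second direction by contrapositive: assume some channel has capacity strictly below $1$; then \emph{every} encoder fails, so in particular $\cV^{\cN}\ket{0}_A$ and $\cV^{\cN}\ket{1}_A$ furnish an orthogonal bipartite pair not discriminable by one-way LOCC. This route sidesteps your encoder-freedom problem because one is handed a bad channel rather than a target pair. Note, however, that the implication actually written there is $\neg P\Rightarrow\neg Q$, which is the contrapositive of the \emph{forward} direction $Q\Rightarrow P$, not of $P\Rightarrow Q$; read literally, it re-establishes $Q\Rightarrow P$ rather than the converse. So your difficulty with the genuine direction $P\Rightarrow Q$ is not an artifact of the route you chose, and your conclusion---that both statements ultimately rest on Lemma~\ref{lem:trace-zero}---is exactly the moral the section is driving at.
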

\begin{proof}
    As has been observed previously \cite{Winter2007env}, without loss of generality, environment-assisted classical codes are one-way LOCC protocols $\cD_{BY \to M} \circ \cL_{E \to Y}$ for discriminating the states $\{\rho^{m}_{BE} \coloneq (\cV^{\cN} \circ \cE)(\dyad{m})\}$ (see Fig.~\ref{fig:env-ass-depiction} for visualization). It follows that if all pairs of orthogonal bipartite pure states are distinguishable by one-way LOCC, then the environment-assisted classical capacity of a quantum channel (whose domain has dimension \(> 1\)) must be at least 1, because the orthogonal pure states \(\cV^{\cN}\ket{0}_A\) and \(\cV^{\cN}\ket{1}_A\) can be perfectly decoded via one-way LOCC. This was noted in \cite{hayden2004correctingquantumchannelsmeasuring}. On the other hand, if there exists a quantum channel \(\cN_{A \rightarrow B}\) (with a domain of dimension \(2\) or higher) whose zero-error environment-assisted classical capacity is strictly less than 1, then it must be the case that for all orthogonal \(\ket{0}_A, \ket{1}_A\), the images \(\cV^{\cN}\ket{0}_A\) and \(\cV^{\cN}\ket{1}_A\) cannot be discriminated via one-way LOCC, so there exist some orthogonal bipartite pure states that are not discriminable with one-way LOCC. 
\end{proof}

By combining this equivalence with Theorem \ref{thm:walgate-et-al-thm}, the fact that $C^{\ve}_{\text{Env}}(\cN)$ can only increase as $\ve$ increases, and that the classical capacity of a qubit quantum channel is at most one by Holevo's theorem \cite[Chapter~11]{Wilde-Book}, we immediately obtain the following known results:
\begin{enumerate}[itemsep=0pt,itemindent=0pt]
	\item \cite[Theorem 2]{hayden2004correctingquantumchannelsmeasuring} For all quantum channels $\cN_{A \to B}$ and all $\ve \in [0,1]$, $C^{\ve}_{\text{Env}}(\cN) \geq 1$.
	\item \cite[Theorem 5]{gregoratti2003quantum}, \cite[Theorem 3]{hayden2004correctingquantumchannelsmeasuring} For any qubit channel \(\cN\), its environment-assisted classical capacity \(C_{\text{Env}}(\cN)\) is 1, i.e.,
    $$C_{\text{Env}}(\cN) \coloneq \lim_{\ve \to 0} \lim_{n \to \infty} \frac{1}{n} C_{\text{Env}}^{\ve}(\cN^{\otimes n})= 1 \ . $$
\end{enumerate}
The above discussion thus shows that one may view all of these environment-assisted capacity results being reducible to Lemma~\ref{lem:trace-zero}, as being because these capacity results are actually Theorem \ref{thm:walgate-et-al-thm} or corollaries thereof.

\section{Acknowledgments}
The authors thank Eric Chitambar and Felix Leditzky for helpful discussions about this work. IG is supported by the Ministry of Education, Singapore, through grant T2EP20124-0005.  MA acknowledges support from the US National Science Foundation Grant PHY-2116246.

\bibliography{References.bib}

\end{document}